\let\csname equation*\endcsname\relax
\let\csname endequation*\endcsname\relax
\newtheorem*{lemma}{Lemma}
\newtheorem*{theorem}{Theorem}
\newcommand{\red}[1]{{\color{red}{#1}}}
\renewcommand{\red}[1]{#1}
\newcommand{\eqsplit}[1]{
\begin{equation}
    \begin{split}
        #1
    \end{split}
\end{equation}
}
\newcommand{\tens}[1]{\mathsf{#1}}
\newcommand{\tA}{\tens{A}}
\renewcommand{\vec}[1]{\mathbf{#1}}
\begin{document}
\title[GNN-based Resource Allocation]{Graph Neural Network-based Resource Allocation Strategies for Multi-Object Spectroscopy}
\author{Tianshu Wang$^1$, Peter Melchior$^{1,2}$}
\address{$^1$Department of Astrophysical Sciences, Princeton University, Princeton, NJ 08544, USA}
\address{$^2$Center for Statistics and Machine Learning, Princeton University, Princeton, NJ 08544, USA}
\ead{tianshuw@princeton.edu; peter.melchior@princeton.edu }

\begin{abstract}
\red{Resource allocation problems are often approached with linear programming techniques. But many concrete allocation problems in the experimental and observational sciences cannot or should not be expressed in the form of linear objective functions.
Even if the objective is linear, its parameters may not be known beforehand because they depend on the results of the experiment for which the allocation is to be determined.}
To address these challenges, we present a bipartite Graph Neural Network architecture for trainable resource allocation strategies. 
Items of value and constraints form the two sets of graph nodes, which are connected by edges corresponding to possible allocations.
The GNN is trained on simulations or past problem occurrences to maximize any user-supplied, scientifically motivated objective function, augmented by an infeasibility penalty. The amount of feasibility violation can be tuned in relation to any available slack in the system.
We apply this method to optimize the astronomical target selection strategy for the highly multiplexed Subaru Prime Focus Spectrograph instrument, where it \red{shows superior results to direct gradient descent optimization and extends the capabilities of the currently employed solver which uses linear objective functions}.
The development of this method enables fast adjustment and deployment of allocation strategies, statistical analyses of allocation patterns, and fully differentiable, science-driven solutions for resource allocation problems.
\end{abstract}

\vspace{2pc}
\noindent{\it Keywords}:

%
\submitto{Machine Learning: Science and Technology}
%
%
%

\section{Introduction}
\label{sec:intro}
Resource allocation deals with the distribution of a fixed amount of resources through a number of admissible actions so as to minimize the incurred cost or maximize the resulting utility. The problem is encountered in a variety of application areas, including load distribution, production planning, computer resource allocation, queuing control, portfolio selection, and apportionment \citep{Katoh1998}. We are particularly interested in allocation problems arising in astronomical research, where the resource to be allocated is observing time at specific telescopes that are expensive to operate, and the utility is given by the scientific information gained from the chosen set of observations. With improved resource allocation strategies, astronomers can expect larger scientific yields or lower operational costs. The challenge lies in the large number of celestial objects that could in principle be observed and the large number of instrumental configurations that could be chosen.


\red{Powerful optimization packages for constrained and mixed-integer optimization like \texttt{GUROBI} can be employed to solve allocation problems, e.g. as a minimum-cost maximum flow-problem \citep{Bertsekas1998-ea}. But this approach has several limitations.
First, the fastest algorithms require a linear programming (LP) formulation, i.e. one in which the objective function and the constraints are linear in the allocations: $f(x) = c^\top x$, subject to $\mathbf{A}x\leq b$.
Although many problems can be expressed as LP, it does not permit cases in which different allocations interact or interfere with each other, and we will show that such cases can easily arise.} 
Second, the minimizers of the respective objective functions are themselves not differentiable with respect to the parameters of the problem such as per-item costs $c$. This precludes such an approach in situations where the actual cost structure is not known a priori, as is often the case in scientific settings.
This limitation can be overcome by treating the non-differentiable solver as a component in an extended gradient-based optimization \citep{Amos2017-ot, cvxpylayers2019, vlastelica2020, Donti2021-rp}, at the expense of additional hyper-parameters and the cost of running the solver inside of the optimization loop. 
Third, increasingly accurate analyses in astronomy and cosmology demand very detailed modeling of the processes that define the set of ultimately observed celestial objects \citep{Rix2021-nv}. It has thus become commonplace to perform hundreds or thousands of simulations to determine the actual ``selection function'' of the observing program \citep[e.g.][]{Ross2017-xg, Mints2019-pi, Everett2020-ya}. Complex MIP solvers, run either directly or as component of a deep learning architecture, would constitute a computational bottleneck for these efforts.

In this paper we present a Graph Neural Network (GNN) solver for general resource allocation problems.
The underlying bipartite graph comprises sets of items and constraints as nodes, connected by edges representing possible allocations.
The GNN is trained on simulations or past problem instances to learn how to take actions, i.e. to assign allocations, that satisfy all constraints within the posed resource limits while maximizing a user-supplied utility function.
In contrast to reinforcement learning, we do not assign immediate rewards for specific actions.
\red{We also do not solve the assignment or scheduling problem, i.e. to determine a specific feasible sequence of assignments to maximize a given objective in a multi-epoch observing program.}
Instead, the GNN predicts the \emph{amount} of resources to allocate for every object such that there is at least one feasible sequence.
We have recently demonstrated that GNNs with such a continuous relaxation solve allocation problems better than strong human heuristics or parameterized evolutionary strategies even if the utility function can only be learned by interacting with the environment \citep{Cranmer2021-jd}.
What we show here is that bipartite GNNs can efficiently learn to obey feasibility constraints of complex real-world environments with discrete allocations.

The remainder of this paper is structured as follows:
In \autoref{sec:method} we describe the problem definition and our GNN solver in detail. In \autoref{sec:apply} we specialize this method to two concrete examples of selecting the optimal set of galaxies to observe with the upcoming Prime Focus Spectrograph, a highly multiplexed instrument on the Subaru Telescope, located on Maunakea in Hawai`i, USA.
In \autoref{sec:training} we discuss training and initialization, and in \autoref{sec:result} we compare the results of our GNN to \red{those of direct gradient descent and the currently established baseline from a LP solver}. We conclude in \autoref{sec:summary} with a summary and an outlook of possible extensions of our approach.

\section{Methodology}
\label{sec:method}

\subsection{Problem Definition}

Following \cite{Katoh1998} and \cite{Kurt1995}, the general resource allocation problem has the form of a (non-)linear programming problem, where we seek to

\begin{equation}
\label{equ:general}
\begin{array}{ll}
 \mathrm{maximize} f(x_1,\dots,x_J) & \mathrm{subject\ to
}\\
h_k(x_1,\dots,x_J)\leq 0 & k\in\{1,\dots,K_\mathrm{ineq}\}\\
h_k(x_1,\dots,x_J)=0 & k\in\{K_\mathrm{ineq}+1,\dots,K_\mathrm{ineq}+K_\mathrm{eq}\}.
\end{array}
\end{equation}
The objective function $f$ depends on allocations $x_j$ $(j \in \{1,\dots,J\})$ that can be either discrete, $x_j \in \{0,1,2,...,T_\mathrm{max}\}$, or continuous, $x_j\in[0,T_\mathrm{max}]$, up to for some finite $T_\mathrm{max}$.
Constraint equations $h_k$ $(k \in \{1,\dots,K=K_\mathrm{ineq}+K_\mathrm{eq}\})$ limit the configurations under which these allocation can be distributed.
Depending on the features of the objective function and the types of  constraints, resource allocation problems form different classes. Cases where the objective function or constraints are linear or convex have known solutions (e.g., \cite{Federgruen1986,Kurt1995,Katoh1998,Shi2015}). But resource allocation problems remain conceptually challenging when the objective function or constraints have more complicated forms, and numerically demanding when allocations are discrete and when the number of variables is large.

We find it beneficial to reparameterize the objective function, i.e. we seek to
\begin{equation}
\label{equ:resource_alloc}
\begin{array}{ll}
\mathrm{maximize} f(y_1\dots,y_I) & \mathrm{subject\ to
}\\
y_i=g_i(x_1,\dots x_J) & i\in\{1,\dots,I\}\\
h_k(x_1,\dots,x_J)\leq 0 & k\in\{1,\dots,K_\mathrm{ineq}\}\\
h_k(x_1,\dots,x_J)=0 & k\in\{K_\mathrm{ineq}+1,\dots,K_\mathrm{ineq}+K_\mathrm{eq}\},
\end{array}
\end{equation}
by means of functions $g_i$ $(i=1,\dots,I)$.
The motivation behind the reparameterization lies in symmetries of the objective function which often permit a strong compression from the full set of $J$ allocations to a much smaller number of variables $y_i$. In particular, if the objective function only depends on the total allocation (e.g. in the single knapsack problem), a single $y_1= \sum_{j=1}^J x_j$ suffices for any $J$.
For resource allocation problems, the set of $y$'s correspond to the items of value for which the allocations are made.

\autoref{equ:general} and \autoref{equ:resource_alloc} can represent many types of optimization problems. What makes resource allocation problems special is that their
$h$ and $g$ functions are permutation invariant,
i.e. there exists functions $\rho$ and $\phi$ such that e.g. $h(x_1,\dots,x_J)=\rho\left(\sum_j \phi(x_j)\right)$ \citep{Zaheer2017-jg}.
Consequently, constraint and the item functions do not depend on the order of arguments. 

\subsection{Graph Construction}

According to \autoref{equ:resource_alloc}, the set of allocations $x_1,\dots,x_J$ provides the arguments to both the $g$ and the $h$ functions.
This dependency structure suggest a representation of the allocation problem in the form of a bipartite graph, where one set of nodes represent the constraints $h_k$ $(k=1,\dots,K)$ and the other represents the items $g_i$ $(i=1,\dots,I)$. Whenever a particular $x_j$ appears as argument of the nodes $g_i$ and $h_k$, the graph has an edge connecting these two nodes.
The set of allocations $x_j$ $(j=1,\dots,J)$ thus defines the connectivity of the graph, with any individual $x_j$ potentially being represented by multiple edges.
Because of the suitable representation, bipartite graphs have a long history in assignment and allocation problems \citep[e.g.][]{Bertsekas1998-ea, Wong2007-fk, Abanto-Leon2017-nb, Nair2020-tv}.

Of particular relevance for this work is that the constraints and items form two classes of similar, permutation invariant functions, as we demonstrate with the following example.

\subsection{Example: Multiple Knapsack Problem}
We demonstrate the ansatz above for the 0-1 Multiple Knapsack Problem (MKP). Given a set of $I$ items and a set of $K$ knapsacks, with $v_i$ and $w_i$ being the value and weight of item $i$, and $c_k$ the capacity of knapsack $k$, the task is to select $K$ disjoint subsets of items such that they maximize the total value. Each subset is assigned to a different knapsack, whose capacity cannot be less than the total weight of items in the subset, i.e. we seek 
\eqsplit{
\label{eq:knapsack}
&\underset{\{x_{11},\dots,x_{IK}\}}{\mathrm{argmax}}\ \sum_{k=1}^K\sum_{i=1}^I v_i x_{ik}\\
&\forall k: \sum_{i=1}^I w_i x_{ik}-c_k \leq 0\\
&\forall i: \sum_{k=1}^K x_{ik}-1 \leq 0\\
&\forall k,i: x_{ik}\in\{0,1\}.
}
Although there are $I\times K$ allocation variables, the objective function actually only depends on $I$ independent combinations of them: $\sum_{k=1}^K\sum_{i=1}^I v_i x_{ik}=\sum_{i=1}^I v_i y_i$, where $y_i=\sum_{k=1}^K x_{ik}$. We can effectively combine the per-item constraints with the definition of the $y_i$ by defining itemization functions $g_i(x)=\min(\sum_{k=1}^K x_{ik},1)$.
The MKP can be simplified and written in the form of \autoref{equ:resource_alloc}:
\eqsplit{
&\underset{\{x_{11},\dots,x_{IK}\}}{\mathrm{argmax}}\ \sum_{i=1}^I v_i y_i\\
&\forall i: y_i=g_i(x)=\min(\sum_{k=1}^K x_{ik},1)\\
&\forall k: h_k= \sum_{i=1}^I w_i x_{ik}-c_k\leq0\\
&x_{ik}\in\{0,1\}.
\label{equ:MKP}
}
The maximizers of \autoref{equ:MKP} are equivalent to those of \autoref{eq:knapsack} with respect to the objective function. The latter formulation permits unfeasible assignments of a single item to multiple knapsacks, which can be corrected by a single pass over all items and removal of all but one assigned knapsack.

From this formulation, we construct a graph as follows: Each $g_i$ is one item node, and each $h_k$ is one constraint node.
The edges $x_{ik}$ connect both sets of nodes and form a complete bipartite graph.
Because the MKP has one constraint equation per knapsack, $h$-nodes represent the knapsacks and the $g$-nodes the items. It is evident that the underlying functions are structurally similar and permutation invariant.

This construction is similar to the graph representation of a MIP in \citet{Nair2020-tv}, but not identical.
They restrict their problem to objectives of the form $\sum_j c_j x_j$ and directly identify the item nodes with $x_j$, we allow for arbitrary permutation invariant functions $g$ to modify the relation between $x_j$ and $y_i$ in \autoref{equ:resource_alloc}.
Also, in the graph the edges correspond to the elements $a_{ij}$ of the matrix in the linear constraint equation $Ax\leq b$, i.e. the carry information about feasibility, whereas the edges in our graph carry information about the allocation amount. 

\subsection{GNN Definition}
\label{sec:block}

\begin{figure}[t]
    \centering
    \begin{subfigure}[t]{.42\textwidth}
    \includegraphics[width=1.\textwidth]{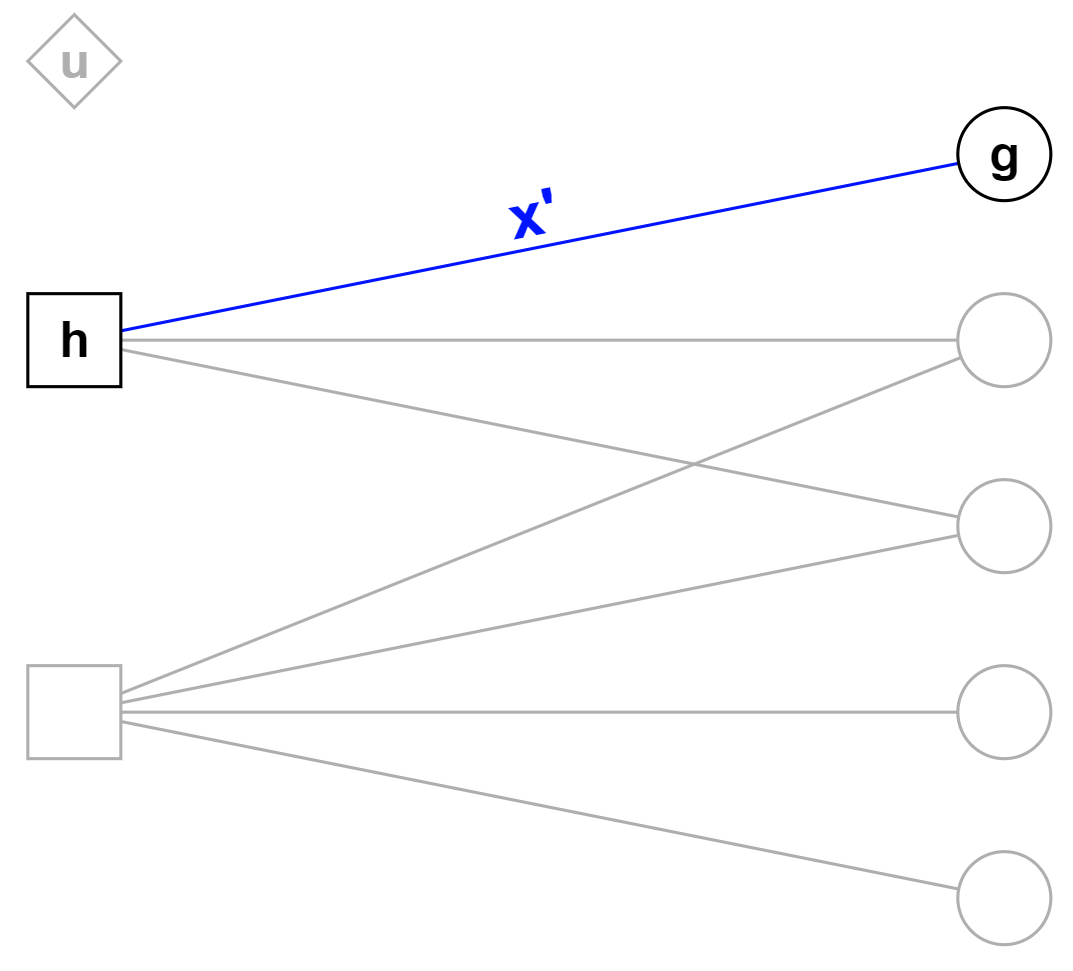}
    \caption{Edge update given the node features.}
    \end{subfigure}
    \begin{subfigure}[t]{.42\textwidth}
    \includegraphics[width=1.\textwidth]{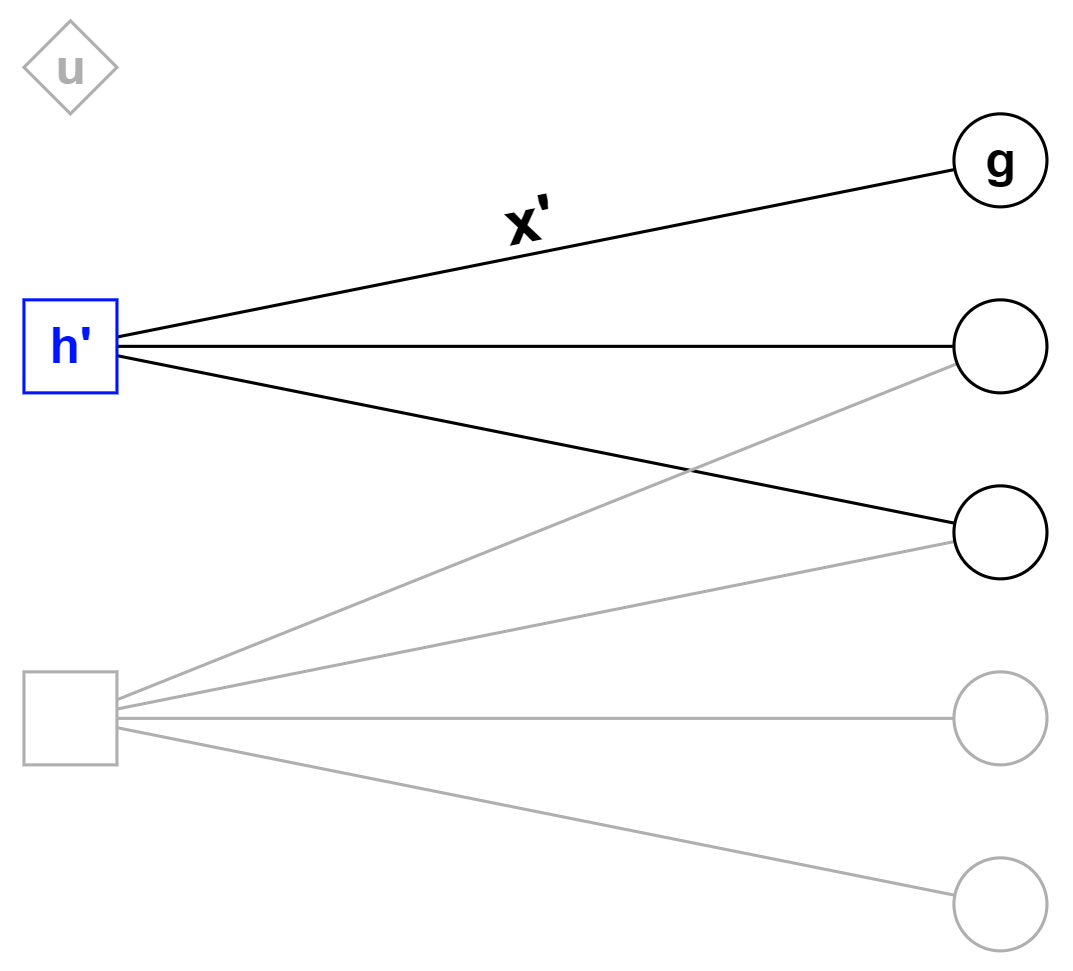}
    \caption{Constraint node update given edge features and connected item node features}
    \end{subfigure}
    \begin{subfigure}[t]{.42\textwidth}
    \includegraphics[width=1.\textwidth]{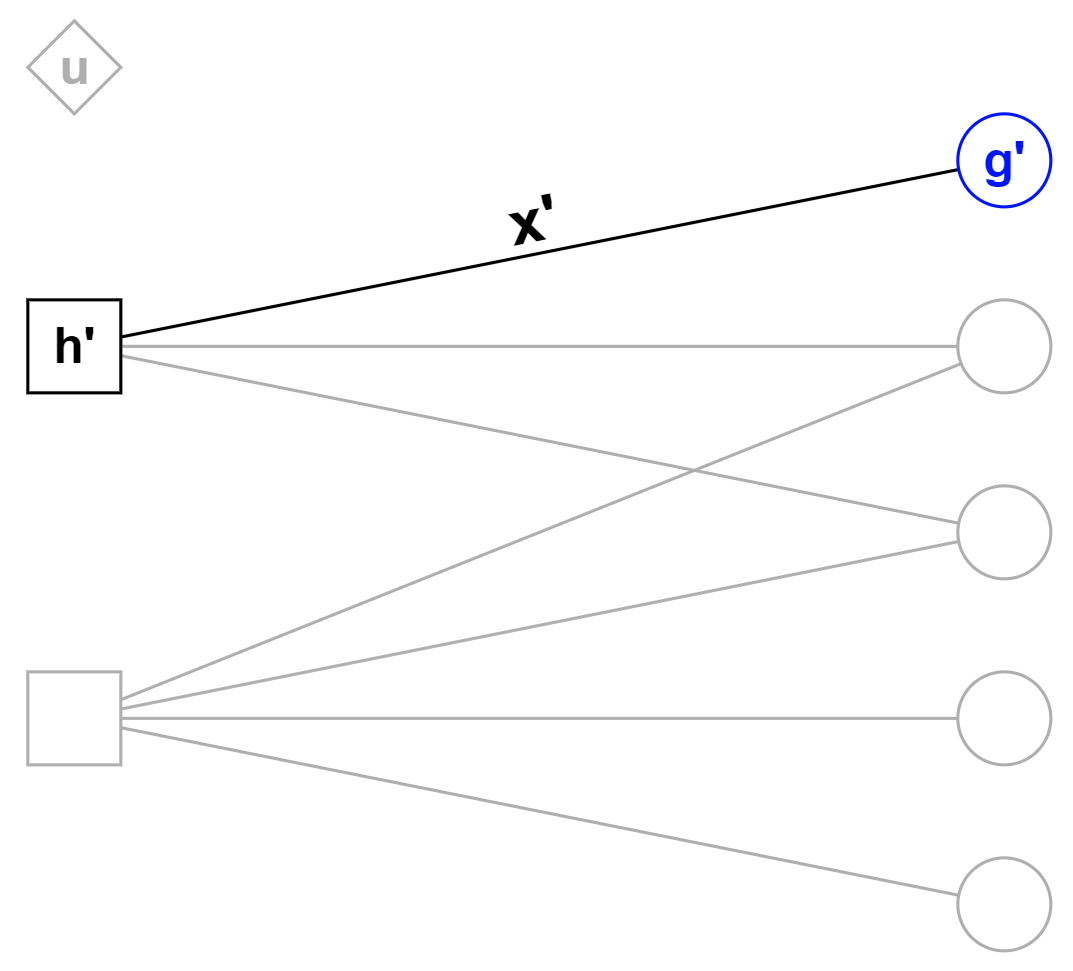}
    \caption{Item node update given edge features and connected constraint node features}
    \end{subfigure}
    \begin{subfigure}[t]{.42\textwidth}
    \includegraphics[width=1.\textwidth]{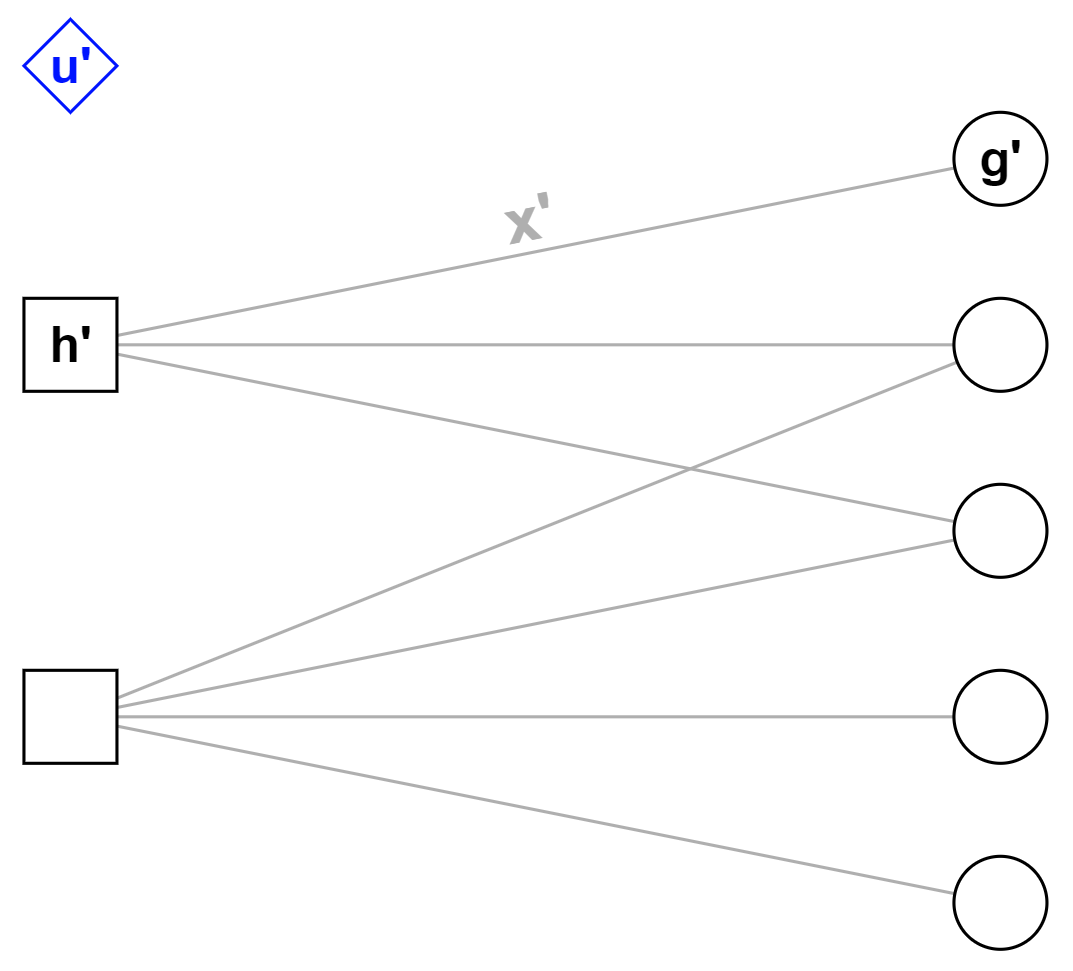}
    \caption{Global update given node features.}
    \end{subfigure}
    \caption{Updates in a GNN block. Blue shows the element that is being updated, black indicates the elements that are involved in the update and grey elements are unused. $h$ and $g$ represent the attributes of the two types of nodes in the bipartite graph, $x$ represents the edge attributes, and $u$ is the global attributes of the graph. Parameters with primes are the updated values.}
    \label{fig:block}
\end{figure}

\red{Unlike traditional MIP solvers, or their neural reformulation \citep{Nair2020-tv}}, we seek to find solutions where the parameters of the problem are not fully determined. For the MKP that can arise e.g. when item values are not known a priori.
In addition, we seek an architecture that learns to solve a particular kind of allocation problem rather than running an explicit solver for every instance of the problem as proposed in e.g. \citet{vlastelica2020}. 
The expected performance gains are important for statistical assessments of the probability of particular allocations.
We thus want to describe allocation problems with a differentiable, trainable model.

The $h$ and $g$ functions in allocation problems form two classes of similar functions, which means that we need to parameterize only the behavior of the classes, not of every class element. This allows us to model relations on the graph with a bipartite version of the GNN blocks defined in \cite{Battaglia18}. 
Specifically, the bipartite GNN block has two distinct node models, instead of only one for the regular GNN block.
Both node models depend on their attached edges and corresponding nodes, and the edge model depends on both sets of attached nodes, whose features we simply concatenate.%
\footnote{Generalizations to tripartite or even more complex graphs are conceivable to address problems in which the constraint and item functions cannot be represented by only two classes.}

In addition to the graph connectivity, each of the three types of models needs to access auxiliary features, such as the item weights in the MKP, so we make sure that each element in the graph has direct access to all information related to its role in the optimization problem (see \autoref{sec:training} for concrete examples).
We hypothesize that the competing demands on available resources can better be met when each node model has access not only to the edge features, but also to the node features on the opposite side of the edge. We therefore concatenate them into an extended edge feature set, expecting that this renders message passing more efficient and thus reduces the number of GNN blocks.
Formally, let $n_x, n_h, n_g, n_u$ be the number of features carried by each edge, constraint node, item node, and global node, respectively.
Also, let $n^a_g$ and $n^a_h$ be the number of different aggregators of the item and constraint models to summarize the information carried by the (extended) edge features. We normally use four aggregators, namely the element-wise mean, variance, skewness, and kurtosis of the  edge features, unless the number of edges is too small to define some of the high-order moments.
Defining $\phi:\mathbb{R}^{(\cdot)}\rightarrow \mathbb{R}^{(\cdot)}$ as a multi-layer perceptron (MLP), our GNN block is thus comprised of $\{\phi^x,\phi^h,\phi^g,\phi^u\}$, where
\begin{itemize}
    \item $\phi^x:\mathbb{R}^{(n_x+n_h+n_g+n_u)}\rightarrow\mathbb{R}^{n_x}$ updates the edge features using the previous edge features, features from the two nodes connected to the edge, and global features;
    \item $\phi^h:\mathbb{R}^{(n_h+n^a_h(n_x+n_g)+1+n_u)}\rightarrow\mathbb{R}^{n_h}$ updates the constraint node features using the previous constraint node features, $n^a_h=4$ aggregators (element-wise mean, variance, skewness, and kurtosis) of the extended edge features, the number of connected edges, and the global features;
    \item $\phi^g:\mathbb{R}^{(n_g+n^a_g(n_x+n_h)+n_u)}\rightarrow\mathbb{R}^{n_g}$ updates the item node features using the previous item node features, the aggregated edge features, and the global features;
    \item $\phi^u:\mathbb{R}^{(n_h+n_g+n_u)}\rightarrow\mathbb{R}^{n_u}$ updates the global features using the mean of the node features and the previous global features.
\end{itemize}
The update sequence is built in a similar way as the \texttt{MetaLayer} class in the \texttt{PyGeometric} package \citep{fey2019}.
In particular, we place another MLP before the aggregation step, which renders the models more flexible, and is the reason why we can handle permutation invariant functions by $\phi^h$ and $\phi^g$ instead of merely symmetric functions \citep{Zaheer2017-jg}.
The updates proceed in the order of \autoref{fig:block}: first the edge model given the node features, then both node models given the respective edge features, and then a global model given the node features.


We stack 4 GNN blocks and perform batch normalization on all nodes and edge features, where the batch dimension is given by the number of nodes or edges of the graph. The number of GNN blocks depends on the complexity of the problem, with more blocks corresponding to more message-passing steps to negotiate between the competing demands on the minimizer of \autoref{eq:lagrangian}. Like \cite{Cranmer2021-jd} we find that 3 or 4 blocks suffice, and we leave determining the optimal number of blocks to forthcoming work.

The output of $\phi^x$ of the last GNN block is a real number $\tilde{x}_j$ and the corresponding $x_j$ is calculated by $x_j=T_\mathrm{max}\times\sigma(\tilde{x}_j)$. If the problem requires integer allocations, we apply a round function to the output.  During training, we replace the round function with the noisy sigmoid function \citep{Edward94}:
\eqsplit{
\label{eq:noisy-sigmoid}
&z\sim\mathcal{U}(-l/2,l/2)\\
&x'=x+z\\
&f(x)=\mathrm{floor}(x')+\sigma[k(x'-\tfrac{1}{2}-\mathrm{floor}(x'))],
}
where $k$ is the sharpness and $l$ is the noise level (see \autoref{fig:sigmoid}).

\begin{figure}[t]
    \centering
    \includegraphics[width=0.8\textwidth]{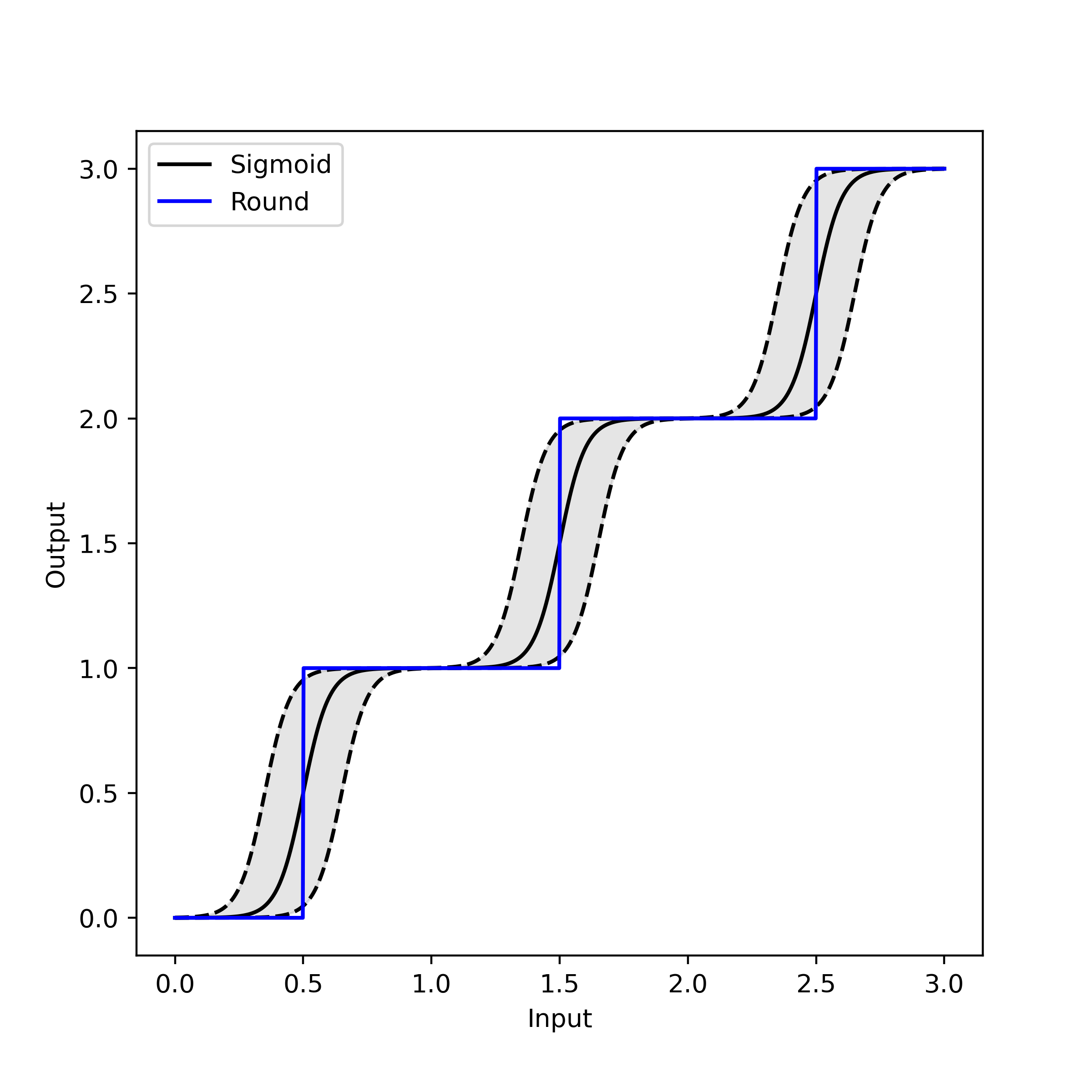}
    \caption{Comparison between noisy Sigmoid function and round function. Black curve is the exact Sigmoid function and the shadow shows the noise. Blue curve is the exact round function. In this figure, the sharpness is 20 and the noise level is 0.3}
    \label{fig:sigmoid}
\end{figure}

\subsection{Loss Function}
We define the loss function as the negative Lagrangian of \autoref{equ:resource_alloc},

\begin{equation}
\label{eq:lagrangian}
L(x_1,\dots,x_J) = -f(y_1,\dots,y_I)+\lambda\sum_{k=1}^K p_k\left[h_k(x_1,\dots,x_J)\right],
\end{equation}
where $y_i=g_i(x_1,\dots x_J)$ and $p_k$ are penalty functions appropriate for constraint violations, e.g. $\ell_1$, $\ell_2$ or ReLU.

The amount of penalty $\lambda > 0$ formally needs to be infinite if only feasible minimizers of \autoref{equ:resource_alloc} are accepted.
We relax this requirement by increasing the penalty to a large number during network training. 
Empirically, we find that this often leads to feasible solutions, or an amount of constraint violation that can tolerated due to slack in realistic settings. 
If solutions with exact feasibility are needed, one can make minor adjustments with a greedy algorithm, e.g. by removing the least valuable items in the case of overallocation.

\section{Application to the PFS Target Selection Problem}
\label{sec:apply}

\begin{figure}
    \centering
    \includegraphics{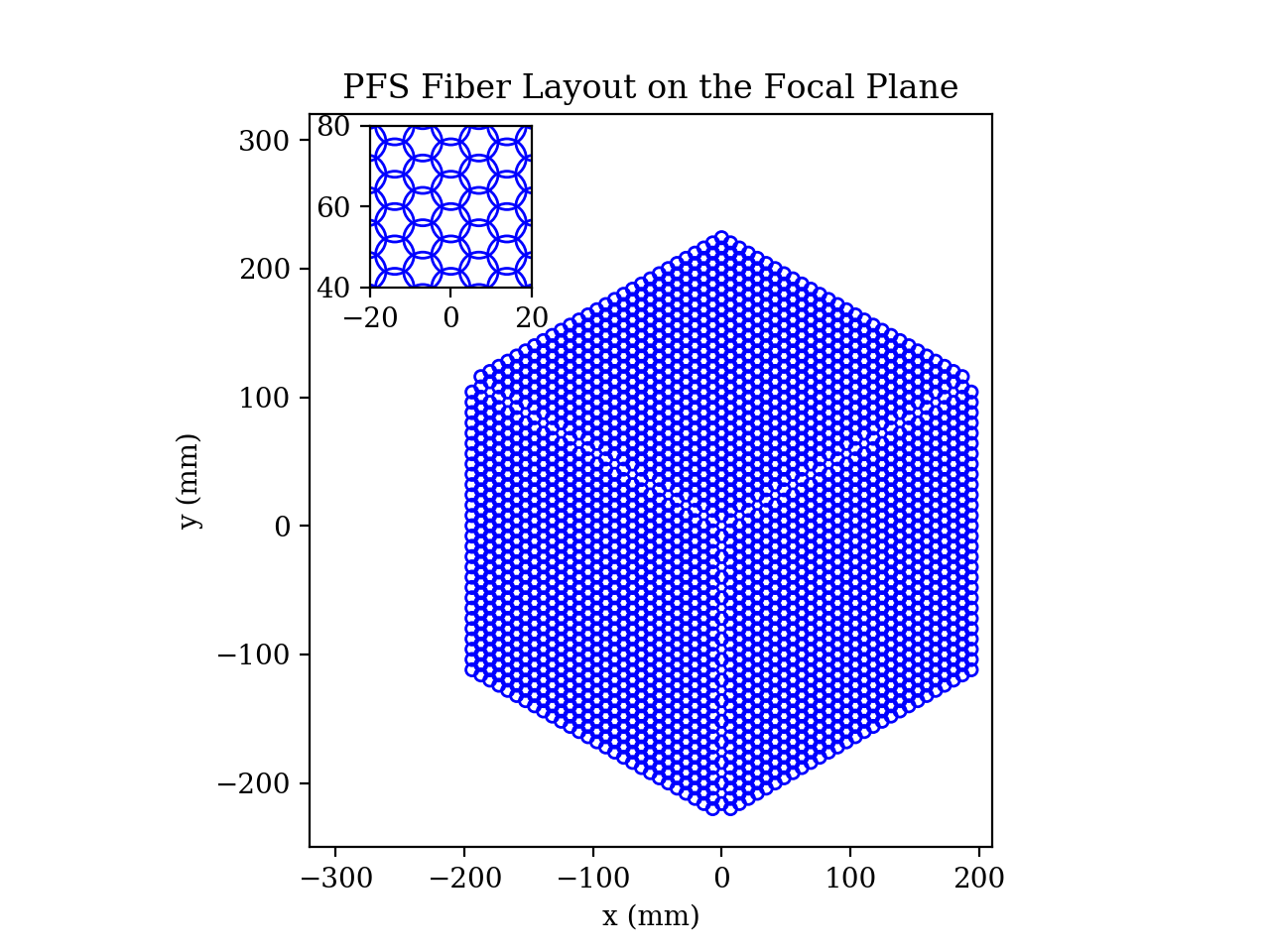}
    \caption{The fiber layout of the Prime Focus Spectrograph in focal-plane coordinates. Circles indicate the patrol region for all 2,394 fibers.}
    \label{fig:layout}
\end{figure}

The Prime Focus Spectrograph (PFS) is a wide-field, highly multiplexed optical and near-infrared spectrograph that will soon be installed at the 8.2m Subaru Telescope located at the peak of Maunakea in Hawai`i, USA \citep{Tamura16}. The instrument is equipped with 2,394 movable fibers distributed over a 1.3 deg$^2$ field of view.
The fibers can be moved laterally so that they can collect the light from astronomical objects they are pointed at. They stay in place for a configurable amount of time to feed light to the dispersive elements of the spectrograph, and ultimately to its detector, forming one `exposure'. 
Between exposures, every fiber can independently be positioned within a circle of $9.5\,\mathrm{mm}$ in diameter by an electro-mechanical actuator.
The whole fiber assembly is packed in a hexagonal pattern with $8\,\mathrm{mm}$ separation (see \autoref{fig:layout}). The overlap between adjacent `patrol regions` enables full sky coverage.%
\footnote{Because each galaxy can be reached by at most two fibers, we limit the aggregator in the item MLP $\phi^g$ to a simple element-wise sum, i.e. $n^a_g=1$. Using mean and variance does not yield any benefits, and higher-order moments would be ill-defined.}. 

\subsection{The Target Selection Problem}
Given a total time allocation budget $T$ and list of astronomical `targets' with their celestial positions and other characterizing features, a target selection strategy has to decide which targets to observe and, possibly for every single one, for how long.
\citet{Cranmer2021-jd} demonstrated that GNNs can solve this allocation problem, even with an implicit objective function, better than heuristics or simple parameterized strategies, but in their approach the allowed allocations $x_j\in[0,T_\mathrm{max}]$ were independent from each other, the only requirement being that $\sum_j x_j\leq T$.

For a multiplexed instrument such as the PFS the solutions are much more strongly constrained because the allocations for all 2,394 fibers in any given exposure must be identical. The allocations of different targets may differ by observing some targets more often than others.
For the planned PFS galaxy evolution program of this case study, each exposure time is fixed at 1 hours, with a total observing time budget, i.e. the sum of all exposure times, of $T=42\,\mathrm{h}$.

Specifically, let $I$ be the number of targets in a single field of view of the telescope. The objective function $f$ measures the scientific utility as a function of the time spent on each target, i.e., $f(\tau_1,\tau_2,...,\tau_I)$. It is related to the properties of the selected galaxies and the specific astrophysical questions at hand. In contrast to \cite{Cranmer2021-jd}, we demand for this work that $f$ is a known function.
\red{But unlike \citet{Lupton2002-dj} and \citet{Blanton2003-ni}, where $f$ is restricted to specific linear functions, we allow it to be an arbitrary permutation-invariant function.}
Let $\Phi_i$ ($1\leq i\leq I$) be the set of fibers that can reach the position of galaxy $i$, and let $\Psi_k$ ($1\leq k\leq 2394$) be the set of galaxies that fiber $k$ can reach. The total time spent on this field is $T$.
The target selection problem is then defined as the following optimization problem
\eqsplit{
\label{equ:opt}
&\arg\max_{t_{ik}}f(\tau_1,\tau_2,...,\tau_I)\\
&\forall i:\tau_i=\min(\sum_{k\in \Phi_i}t_{ik},T_\mathrm{max})\\
&\forall k: \sum_{i\in \Psi_k}t_{ik}\leq T\\
&\forall i,k: t_{ik}\in \{0,1,2,..,T_\mathrm{max}\}
}
where $t_{ik}$ is the time fiber $k$ spends on galaxy $i$, and we redefined all times in integer multiples of the base exposure time of $1\,\mathrm{h}$. The maximum time $T_\mathrm{max}\leq T$ any single galaxy can receive is set by the scientific program. Compared to the general form of resource allocation problem in \autoref{equ:resource_alloc}, we see that $\tau$'s and $t$'s correspond to $y$'s and $x$'s, respectively.

In principle, we must make sure that any fiber only observes at most one galaxy and any galaxy is observed by at most one fiber at each exposure:
\eqsplit{
\label{equ:decompose}
&\sum_{l=1}^Tt_{ikl}=t_{ik}\\
&\sum_{i\in\Psi_k}t_{ikl}\leq1\\
&\sum_{k\in\Phi_i}t_{ikl}\leq1\\
&t_{ikl}\in\{0,1\}
}
where $t_{ikl}$ is the time spent on target $i$ from fiber $k$ in exposure $l$. However, we prove in \autoref{sec:proof} that finding a sequence of exposure-level assignments is always possible as long as no explicitly sequence-dependent term appears in objective or constraint functions, and can be found in at most polynomial time.%
\footnote{If the relation between fibers and galaxies changes with time due to effects like dithering, i.e., $\Phi_i$ and $\Psi_k$ are time-dependent, we split one constraint into multiples  to make sure each constraint is time-invariant and  can be related to one fixed constraint node in the graph.}
Thus, we can focus on the optimization problem in \autoref{equ:opt} without having to worry about the sequence decomposition. 

The utility function can in some cases be written as the sum of the individual utilities of each galaxy, i.e. $f(\tau_1,\tau_2,...,\tau_I)=\sum_{i}f_i(\tau_i)$, which leads to a nonlinear MKP \red{that is already outside of the scope of LP solvers.}
However, the total scientific yield generally depends on the collective properties of \emph{all} observed galaxies.
For example, a scientific study may require that at least a certain number of galaxies be observed such that a combined measurement reaches a desired significance. The utility function is thus the sum of a separable part and a non-separable part, 
\begin{equation}
f(\tau_1,\tau_2,...,\tau_I)=\sum_{i}f_i(\tau_i)+s(\tau_1,\tau_2,...,\tau_I).
\end{equation}
We will define the specific form of $f$ for two cases below. The final loss function is then a specialization of \autoref{eq:lagrangian} for the problem in \autoref{equ:opt}:
\begin{equation}
L(t_{11},\dots,t_{IK})=-f(\tau_1,\tau_2,...,\tau_I)+\lambda\sum_{k=1}^{2394}p(\sum_{i\in \Psi_k}t_{ik}-T)
\end{equation}

\subsection{Case 1: Predefined Galaxy Classes}
\label{sec:case1}

The galaxy evolution program in the PFS Subaru Strategic Program (SSP) survey \citep{Takada2014-yr} currently plans to target a variety of galaxies, and has tentatively identified 12 distinct science cases and defined selection criteria for each of them. The sets of galaxies that satisfies these criteria define 12 galaxy classes.
Each of the science cases also defines the number of exposures a galaxy in the respective class should receive.
\autoref{tab:classes} shows the 12 galaxy classes and the number of galaxies satisfying the selection criteria in a reference field. The total number of visits available is $T=42$, while the time spent on a single galaxy is limited to $T_\mathrm{max}=15$.

A general goal in designing the criteria \red{and costs $c_m$} of \autoref{tab:classes} is that the program observes as many galaxies as possible for every science case, ideally in a reasonably equitable distribution.
We formalize this by means of an objective function that maximizes the minimal per-class completeness over all 12 classes:
\begin{equation}
\label{equ:case1-obj}
f(\tau_1,\dots,\tau_I)=\min\left(\frac{n_1}{N_1},\frac{n_2}{N_2},...,\frac{n_{12}}{N_{12}}\right)\ \mathrm{with}\ n_m\equiv\sum_{i\in\Theta_m}\sigma\left(\frac{\tau_i+0.5-T_m}{0.2}\right),
\end{equation}
where $\Theta_m$ is the $m$-th class, $T_m$ is its proposed per-galaxy exposure time, and $N_m$ is the number of galaxies in the field falling into class $\Theta_m$. 
We denote $n_m$ as the number of fully observed galaxies in class $m$.
During training we use a sigmoid function, as indicated above, to smoothly approximate the step function, but at test time we replace it with the actual step function to count distinct allocations.
We chose as penalty function $p$ a squared ReLU function, i.e. an inequality constraint on the fiber allocation capacity, because there is no need to exhaust all resources if no gain in $f$ is achieved.

\red{\autoref{equ:case1-obj} is evidently non-linear and entirely non-separable as it seeks to balance the allocation across 12 classes, each of which is comprised of thousands of galaxies, for which multi-exposure allocations need to be determined. The exact form of the equation could be chosen differently, but the underlying idea is motivated by the current survey design principle for the PFS galaxy evolution program.}

\begin{table*}[]
\caption{Predefined Galaxy Classes for Case 1. The required exposure times $T_m$ have been determined by the PFS Galaxy Evolution program on the basis of expected performance of the instrument, \red{and the costs $c_m$ provide the current baseline, which has been found by manual exploration}. $N_m$ denotes the number of galaxies that satisfy the class selection criteria in a reference field.}
\centering
\label{tab:classes}
\begin{tabular}{lcccccc}
\hline\hline
&1&2&3&4&5&6\\
\hline
$T_m$ [h] &2&2&2&12&6&6\\
$c_m$ &19683&19683&59049&531441&177147&177147\\
$N_m$ [$10^3$] &68.2&69.3&96.3&14.4&22.0&8.3\\
\hline\hline\\
\hline\hline
&7&8&9&10&11&12\\
\hline
$T_m$ [h] &12&6&3&6&12&1--15$^a$\\
$c_m$ &531441&177147&59049&177147&531441&59049\\
$N_m$ [$10^3$] &14.0&22.0&7.4&4.5&2.8&9.7\\
\hline\hline
\end{tabular}
\footnotesize
\\$^a$Each galaxy in this class has an independent exposure time requirement. 
\end{table*}

\subsection{Case 2: A General Objective Function}
\label{sec:case2}

For case 2, we envision a smaller observing program that could be carried out with PFS in a single night.
We thus adopt a very modest allocation of $T=6\,\mathrm{h}$ and $T_\mathrm{max}=4\,\mathrm{h}$.
Instead of adopting predefined classes, we now combine two objectives: 1) maximizing the number of galaxies, for which spectroscopic redshifts $z$ can be determined with a precision $\delta z<0.001$. Such a sample of galaxies can be used to reconstruct the so-called cosmic web \citep[e.g.][]{Jasche2015-qi, Horowitz2021-du}.
2) creating a sample of at least 5,000 faint galaxies at relatively large redshift $z>1$ and within a range masses, $11.8<\log_{10}M_\mathrm{halo}<12.5$, that should be observed at least once. The purpose of such a sample is to aggregate their spectra and achieve high signal-to-noise ratio to test for the presence of specific spectral features \citep[e.g][]{Carnall2019-rc, Salvador-Rusinol2019-ax}.
While the specific definitions of these objectives are hypothetical, they serve as an example of a directly science-driven fiber allocation strategy for PFS.

The objective function thus contains two parts. The separable part of objective 1 is the per-galaxy success rate of redshift measurements. The success rate, a number between 0 and 1, is calculated by fitting the simulated noisy spectrum of the galaxy, and inferring of a redshift can be estimated from the spectrum with the desired precision. We use the same galaxy simulation as in \cite{Cranmer2021-jd}, which employs a single spectral type for every galaxy, so that the redshift success is a function of redshift, mass, and exposure time only. 
We calculate the success rate $\mathrm{SR}_i(t)$ of galaxy $i$ after $t=1,\dots 4$ exposures, and then linearly interpolate them:
\eqsplit{
\label{eq:case2-fi}
f_i(\tau_i)=
\begin{cases}
\tau_i \mathrm{SR}_i(1), &0\leq\tau_i\leq1\\
(\tau_i-1) (\mathrm{SR}_i(2)-\mathrm{SR}_i(1))+\mathrm{SR}_i(1), &1\leq\tau_i\leq2\\
(\tau_i-2) (\mathrm{SR}_i(3)-\mathrm{SR}_i(2))+\mathrm{SR}_i(2), &2\leq\tau_i\leq3\\
(\tau_i-3) (\mathrm{SR}_i(4)-\mathrm{SR}_i(3))+\mathrm{SR}_i(3), &3\leq\tau_i\leq4\\
\end{cases}
}

The non-separable part for objective 2 amounts to counting the number of galaxies that satisfy the specified redshift and mass requirements and that are observed by at least one exposure. Let $\Theta$ be the set of all such galaxies. We adopt the following continuous approximation:
\begin{equation}
\label{eq:case2-s}
    s(\tau_1,\dots,\tau_I)=10000\, \sigma\left(\frac{n-5000}{100}\right)\ \mathrm{with}\ n\equiv\sum_{i\in\Theta}\sigma\left(\frac{\tau_i-0.5}{0.2}\right),
\end{equation}
i.e. $n$ denotes the number of observed galaxies satisfying the selection requirements. This objective term prefers $n>5000$ and is saturated at $n\approx 5500$. The prefactor 10000 is a large number compared to $\sum_i f_i$, chosen to ensure that the second objective receives preference over the first. This choice needs to be made for any multi-objective optimization. The sharpness of the sigmoid functions, 0.2 and 100 in \autoref{eq:case2-s}, are two hyperparameters. Larger sharpness leads to a better approximation to the step function, but is also more difficult to optimize. One could start with small sharpness parameters and then gradually increase them during the training, but we achieve good results with fixed parameters after a hyperparameter search. 

We chose the $\ell_2$ penalty function for case 2 to reduce over- and under-allocation. In contrast to case 1, the time allocation is strongly limited and insufficient to saturate both objectives for the large number of available galaxies. We expect that the under-allocation penalty will become largely obsolete at the end of training but that it provides more meaningful gradient directions during training.

\section{Feature Sets and Training}
\label{sec:training}

Of particular importance are the feature set for the items, which in our cases correspond to one galaxy per node. We thus need to provide to the initial item nodes all features that meaningfully describe the optimization problem from the perspective of the galaxies.

In Case 1, the feature set comprises $T_m$, an one-hot version of the class index from \autoref{tab:classes}, and an extra random number, which distinguishes between different galaxies in the same class.
All other nodes, edges and global features of the graph are initialized with zeros. 
In Case 2, the item node features are initialized to ($\mathrm{SR}_i(1),\mathrm{SR}_i(2),\mathrm{SR}_i(3),\mathrm{SR}_i(4)$ of \autoref{eq:case2-fi}) and a Boolean variable showing whether or not the galaxy satisfies the redshift and mass requirements of \autoref{eq:case2-s}), while all other nodes, edges and global features of the graph are initialized with zeros.

In both cases, we use 10 graphs to train the GNN model, 5 graphs to validate and 5 graphs to test its behavior. There is no overlapping region between training, validating and testing graphs. The model is trained with Adam \citep{Kingma2015-pq} on a 320 NVIDIA P100 GPU. We start with a 2000-epoch pre-training phase with a fixed penalty strength $\lambda$,  followed by a 8000-epoch training with exponentially increasing $\lambda$. Other training parameters are shown in \autoref{tab:train_parm}. 

We do a coarse hyperparameter search over the learning rate, the penalty factor, and the noise level of the noisy sigmoid function. The learning rates in both cases are searched from $10^{-4}$ to $10^{-2}$. The penalty factor in Case 1 is varied between $10^{-8}$ and $10^{-6}$, in Case 2 between $10^{-3}$ and $10$. And the noise level is searched between $0.1$ and $0.4$. The sharpness of the noisy sigmoid method is fixed to 20. The dimensionality of the GNN functions $n_x$, $n_h$ and $n_u$
is set to 10, while $n_g$ is set according to the item features listed above. We experimented with 20-dimensional features but found no improvements.

\begin{table}[]
\caption{Training Parameters. LR is the learning rate, $\lambda$ is the penalty factor and $l$ is the noise level.}
\centering
\label{tab:train_parm}
\begin{tabular}{cccccccc}
\hline\hline
       & \multicolumn{3}{c}{Pre-Training}    & \multicolumn{4}{c}{Training}                            \\
       \hline
       & LR               & $\lambda$        & $l$  & LR               & $\lambda_{start}$ & $\lambda_{end}$  & $l$ \\
       \hline
Case 1 & $5\times10^{-4}$ & $1\times10^{-7}$ &0.3 & $5\times10^{-4}$ & $1\times10^{-7}$  & $1\times10^{-4}$ &0.3 \\
Case 2 & $1\times10^{-3}$ & 0.1              &0.3 & $1\times10^{-3}$ & 0.1               & 1.0              &0.3 \\
\hline\hline
\end{tabular}
\end{table}

\section{Results}
\label{sec:result}

We report the GNN test scores in \autoref{tab:result1} and \autoref{tab:result2} in terms of the objective function as well as the adherence to the constraints.
To the latter end, we define the total overtime and unused time, i.e., $\Delta T=\sum_k\max(0,\sum_{i\in \Psi_k}t_{ik}-T)$ and $\Delta T'=\sum_k\max(0,T-\sum_{i\in \Psi_k}t_{ik})$, and calculate the fraction of such over/unused time compared to the total available observation time $T_\mathrm{all}=T\,K$. The result is written as ${f_0}^{+\Delta T/T_\mathrm{all}}_{-\Delta T'/T_\mathrm{all}}$.
For example, $10^{+3\%}_{-2\%}$ means that the value of the objective function is 10, with 3\% overtime and 2\% unused time.

\subsection{Case 1: Balancing Predefined Classes}

The training and test data was derived from a galaxy catalog provided by the PFS galaxy evolution program.
For classes 1-8 in \autoref{tab:classes}, we use the EL-COSMOS catalog \citep{Saito2020}, which is based on the COSMOS2015 photometric catalog \citep{Laigle2016}.
Since the area coverage of this catalog is too small for simulations of multiple PFS pointings, we repeat the central region of the catalog in a $3\times 3$ tiling pattern, so that the final extended catalog covers a contiguous area of $\sim 10\, \mathrm{deg}^2$.
The remaining classes are artificially superposed on the same region so that the number densities are consistent with the expectation.
Each galaxy in the catalog has a label indicating the class it belongs to. 

We compare our GNN approach to the currently employed network flow optimization method, which is based on the fiber-assignment method in \citet{Blanton2003-ni}. Similar to our approach, it constructs a graph connecting fibers and galaxies, but then solves a \red{linear} min-cost max-flow problem on the graph with the MIP optimizer \texttt{GUROBI}, given predetermined costs for every galaxy class: $f(\tau_1,\dots,\tau_I) = \sum_i c_m\, \iota(i \in \mathcal{C}_m \wedge \tau_i \geq T_m)$, where $\iota$ denotes the indicator function.
Multi-exposure programs like case 1 can be implemented by creating a graph with one fiber node per exposure.
The network flow optimization guarantees feasibility but does not permit the adjustment of the class costs to maximize the objective function. We therefore adopt, \red{as a baseline and a representation of the current state of development, the fixed costs $c_m$ from \autoref{tab:classes} which were identified through manual exploration of the linear objective listed above. It is important to emphasize that these costs they were determined with the same general goal, namely to achieve an equitable distribution of completeness across all galaxy classes, but not the specific objective function in \autoref{equ:case1-obj}.}

For a more flexible optimization of the objective function, we also solve the problem of \autoref{equ:case1-obj} in the form of \autoref{equ:opt}, i.e. directly for $\mathcal{O}(10^5)$ of $t_{ik}$, by ordinary gradient descent.
We use \autoref{eq:noisy-sigmoid} to convert $t_{ij}$ to integers at test time.
We have tried different types of gradient descent (Adam, momentum), but the results are very similar.


\begin{table}[]
\caption{Case 1 results in terms of the values of the objective function in \autoref{equ:case1-obj} (minimal completeness across the classes in \autoref{tab:classes}) from network flow optimization with preset costs (`Baseline-LP`); direct gradient descent of \autoref{equ:opt} (`GD'); and our GNN method for 5 independent test fields. The percentages denote the fraction of the full time allocation $T$ that is overallocated ($^+$) or underallocated ($_-$), averaged over all fibers.}
\label{tab:result1}
\centering
\begin{tabular}{cccc}
\hline\hline
Field ID & \red{Baseline-LP} & GD & GNN (Ours)\\
\hline
1        &$0.773_{-19.3\%}^{+0.0\%}$             &$0.824_{-1.4\%}^{+0.8\%}$    &$0.877_{-9.9\%}^{+0.1\%}$     \\
2        &$0.764_{-20.1\%}^{+0.0\%}$             &$0.827_{-1.6\%}^{+0.8\%}$    &$0.876_{-10.2\%}^{+0.1\%}$     \\
3        &$0.767_{-20.5\%}^{+0.0\%}$             &$0.829_{-1.8\%}^{+0.8\%}$    &$0.880_{-10.5\%}^{+0.1\%}$     \\
4        &$0.768_{-20.6\%}^{+0.0\%}$             &$0.828_{-2.0\%}^{+0.8\%}$    &$0.870_{-10.7\%}^{+0.1\%}$     \\
5        &$0.775_{-20.7\%}^{+0.0\%}$             &$0.830_{-1.9\%}^{+0.8\%}$    &$0.871_{-10.8\%}^{+0.1\%}$    \\
\hline\hline
\end{tabular}\\
\end{table}

The results are shown in \autoref{tab:result1}. In all 5 test fields, our GNN method outperforms the current baseline and the gradient descent solver despite being trained on fields different from the test fields.
The network-flow fiber assignment provides a good baseline with a minimum completeness of $\approx76\%$, but it leaves $\approx 20\%$ of the time unallocated. \red{This apparent contradiction is not an indication of suboptimal performance of the method itself. Instead,} it suggests that the pre-determined class costs of \autoref{tab:classes} are suboptimal for this specific objective function.
The GD method, which like our GNN optimizes \autoref{equ:case1-obj}, improves upon this baseline. 
But we find that, depending on the initialization, it can require a very large number of iterations to converge to a (local) minimum, 
as expected for such a high-dimensional optimization problem.
The GNN benefits from learning a model of what makes galaxies valuable in relation to the constraints, and it communicates that through message passing on the graph.
While the GNN MLPs have in total $O(10^4)$ parameters themselves, they encode the \emph{strategy} of solving \autoref{equ:opt} with galaxy and fiber configurations as given by the training data and the instrument. As a result, similar galaxies will generally be evaluated similarly. This generalization leads to an increased completeness of $\approx 88\%$ even though the solution has not been optimized on the test fields.

With respect to feasibility, unused time is not a concern for case 1. We expected that conflicts between highly valuable galaxies will prevent full utilization of the time allocation, and have confirmed that in the test results. For instance, a canonical problem arises from multiple long-integration galaxies being located in the patrol region of a single fiber. Because of the partial overlap of the patrol regions, some, but not all, of these conflicts can be solved by utilizing a neighboring fiber. If that cannot be achieved, a fraction of the available time cannot be used to increase the completeness of the respective class and, in turn, of the objective function.
However, in comparison to Baseline, the GNN approach evidently converts unused time into gains of the objective, which reveals the suboptimality of having to predetermine the costs for this complex resource allocation problem.
Interestingly, GD does not achieve higher completeness than the GNN despite utilizing almost all the available time.

Overtime violations are, by design, impossible for the network flow method, and are almost completely avoided by the GNN strategy.
As we detail in \autoref{sec:results-case2}, a minor overtime violation is acceptable in this case, but could be avoided entirely by increasing the penalty strength beyond the final value in \autoref{tab:train_parm}.
The Brute Force solver has minor overtime allocations, smaller than the unused allocations, consistent with the asymmetry of the penalty.

In addition to the highest objective function values, GNN is also the fastest method. For every field, both Baseline and GD need to be run again, while the runtime of the GNN is less than 1 second once the training is done. However, even if we include the training time, the GNN is still faster than a single run of the network flow optimization with {\tt GUROBI}.

\subsection{Case 2: Optimizing a General Objective Function}
\label{sec:results-case2}

The training and test data were derived from {\it UniverseMachine} simulations \citep{Behroozi19}, which has a size of $4.0\times4.0$\,deg$^2$, comprising about 35,000 galaxies in a single PFS field of view. The spectrum simulation follows the approach in \citet{Cranmer2021-jd}, which uses a single spectral type of a massive elliptical galaxy, artificially redshifted, and scaled in amplitude to match the expected performance of PFS for a given stellar mass. Stellar masses were predicted from {\it UniverseMachine} halo masses according to the scaling relation in \citet{Girelli2020-ef}. The precision of the redshift estimates was determined by fitting the known spectrum template to 100,000 such galaxy spectra in the presence a constant sky spectrum and the corresponding Poisson shot noise. This procedure constitutes a best-case scenario because spectral misclassification is impossible and catastrophic outliers are rare.

Case 2 again cannot directly be solved with \red{LP} techniques because the main aspect of this problem lies in the determination of the relative importance of the two competing objectives as well as the individual per-galaxy utilities of objective 1 (the precision of the redshift estimation).
We therefore adapt a known heuristic approach to precondition the problem, \red{so that we can express it as a LP problem}.
We first randomly select 5,000 galaxies satisfying the redshift and halo mass conditions and label all of these galaxies as class 1, to be observed with a single exposure. Giving this class infinite costs ensures to saturate \autoref{eq:case2-s}.
We then chose a proposed time allocation $\tau_i$ for all other galaxies $i=1,\dots,N$, so that it maximizes the expected gain, $\tau_i = \mathrm{argmax}_{\tau\in\{0,1,2,..,T\}} \left[\tfrac{f_i(\tau)}{\tau}\right]$ \citep{Dantzig1957-ca}, where $f_i$ is defined in \autoref{eq:case2-fi}.
\red{The same min-cost max-flow MIP} solver we used for case 1 is then run with $1+N$ classes, where $N$ classes are comprised of only one galaxy each and specified by their proposed time and expected utility $f_i(\tau_i)$. Because the classes are defined separately for the two objectives, galaxies in class 1 cannot be used for redshift measurement, necessarily leading to a suboptimal solutions for galaxies that are useful for both objectives.
We also run the brute-force Gradient Descent method for comparison.

The results are shown in \autoref{tab:result2}. 
Because the second objective term $s$ is saturated in all cases, we only show the total redshift success rate of \autoref{eq:case2-fi} as the objective. 
We can see that the results of the GNN method are superior to GD and the \red{Baseline} method in terms of the objective function. This result demonstrates that our method is capable of finding effective strategies for allocating resources in this general test case that combines separable and non-separable objectives.

We note that the GD method is closer to the GNN results than it was in case 1, which we attribute to the reduced volume of the parameter space due to the shorter program times ($T_\mathrm{max} = 4$ instead of $T_\mathrm{max} = 15$).
We also find that the GNN method yields mild levels of feasibility violations.
Although we could in principle avoid such violations by further increasing the penalty factor $\lambda$, we allow them here because observations with PFS will simultaneously allocate about 10--20\% of the fibers as calibration targets.
We decided to ignore this operational complication for this work, but, because the numbers of calibrations measurements are flexible, we can compensate a small amount of over- or unused time with the calibration allocations. 

\begin{table*}[]
\centering
\caption{Case 2 test results in terms of the values of the objective function in \autoref{eq:case2-fi} (i.e. aggregated redshift success rate; the second objective of \autoref{eq:case2-s} is fully saturated by design) from three competing strategies for 5 independent test fields. The percentages denote the fraction of the full time allocation $T$ that is overallocated ($^+$) or underallocated ($_-$), averaged over all fibers.}
\label{tab:result2}
\begin{tabular}{cccc}
\hline\hline
Field ID & \red{Baseline-LP} & GD & {GNN (Ours)} \\
\hline
1                                &$2184.7_{-0.0\%}^{+0.0\%}$                        &$2485.6_{-0.0\%}^{+0.0\%}$                      &$2593.1_{-0.4\%}^{+1.2\%}$          \\
2                                &$2084.4_{-0.0\%}^{+0.0\%}$                            &$2404.2_{-0.0\%}^{+0.0\%}$                      &$2485.6_{-0.5\%}^{+1.1\%}$            \\
3                                &$2151.7_{-0.0\%}^{+0.0\%}$                            &$2457.1_{-0.0\%}^{+0.0\%}$                      &$2544.4_{-0.4\%}^{+1.1\%}$            \\
4                                &$2295.6_{-0.0\%}^{+0.0\%}$                            &$2590.4_{-0.0\%}^{+0.0\%}$                      &$2696.1_{-0.5\%}^{+0.9\%}$            \\
5                                &$2308.5_{-0.0\%}^{+0.0\%}$                            &$2623.6_{-0.0\%}^{+0.0\%}$                      &$2711.9_{-0.5\%}^{+1.0\%}$            \\
\hline\hline
\end{tabular}\\
\end{table*}

\section{Summary and Outlook}
\label{sec:summary}

Resource allocation problems arise in many application areas but remain challenging, especially if they involve high-dimensional and discrete allocation spaces \red{and non-linear or non-separable objectives}. In this paper we present a bipartite GNN architecture that learns a strategy for solving general resource allocation problems. 
It is based on message passing on a graph formed from nodes representing the items of value and the allocation constraints, respectively, connected by edges corresponding to all possible allocations.
It is trained to minimize any user-specified objective function, augmented by a penalty for constraint violations, using instances of the problem -- either from historical occurrences or simulations -- that should capture all relevant aspects of the problem at test time.

We apply our GNN method to the target selection problem in astronomy, which, when given a total observing time budget, amounts to choosing which celestial sources from within a given sky area are to be observed, and for how long. Specializing on a highly multiplexed instrument, the Prime Focus Spectrograph for the Subaru Telescope at Maunakea in Hawai`i, results in the additional complication of having to assign discrete and identical exposure times to sources observed simultaneously by all 2,394 fibers of this instrument.

We demonstrate that our GNN method finds efficient allocation strategies in two realistic problem settings \red{with non-linear and non-separable objectives}.
We compare our results to two direct solvers, one performing a minimum-cost maximum-flow network optimization \red{with predetermined costs}, and the other directly solves for all possible allocations by gradient descent.
Our method yields higher values of the objective function in all cases for every test field.
It formally guarantees feasibility only for infinitely large penalties, and we recommend to increase the penalty term during training until feasibility is achieved or feasibility violations are deemed tolerable. The tuning of the feasibility penalty also allows the exploration of strategies in systems with some amount of slack or surplus, as we expect in the case of PFS.

The development of this GNN method for resource allocations bring two important benefits for future work.
First, the runtime for the GNN solution is much shorter than that of direct solvers, of order 1 second compared to several hours in some cases.
\red{Substantial accelerations by neural MIP solvers have also been found in \citet{Nair2020-tv}. In our case, performing the GNN optimization to precondition a traditional MIP solver should lead to substantially reduced computational costs while maintaining the guaranteed feasibility of that solver.}
Either option will render it practically doable to roll out strategy updates over a large number of problem instances or to assess the probabilities that any item receives some amount of allocation. This so-called `selection function' is of critical importance for precision analyses in astrophysics and cosmology.

Second, multi-objective problems require the balancing of priorities for different kinds of items (e.g. galaxies in our case 1), which traditionally have to be established beforehand. If the respective utilities are not known a priori, as is routinely the case in scientific experiments, the complexity of this task renders it unlikely that manual exploration of the priorities yield near-optimal results. 
Our GNN provides a differentiable architecture, thereby exposing all relevant parameters of the problem to optimization. 
Similar to \citet{Cranmer2021-jd}, we intend to make use of this capability in forthcoming works to train another neural network to learn the utility of galaxies based on easily observable features instead of assuming that these utilities are known, as we have done in test case 2.

The permutation invariance and flexible node and edge models of GNNs render them exceptionally well suited for resource allocation problems. We suspect that is should also work well e.g. for auction strategies \citep{Huang2008-xp}. Other interesting questions beyond the scope of this work relate to the goal of Explainable AI, for instance: what information is passed between the nodes of the graph; how many message-passing steps are needed to achieve these results; and what role does the global model play. 

The GNN code used in this paper is available at  \url{https://github.com/tianshu-wang/PFS-GNN-bipartite}.

\section*{Acknowledgements}
The authors want to thank Kiyoto Yabe for his help with application case 1.
This work was supported by the AI Accelerator program of the Schmidt Futures Foundation.

\bibliography{example_paper}

\appendix

\section{Proof of the Theorem}
\label{sec:proof}

Let $V$ be the set of all vertices and $E$ be the set of all edges, we have a hypergraph $G=(V,E)$. The connectivity of this graph is represented by the incidence matrix $\tA\in\mathbb{R}^{|V|\times|E|}$, where $\tA_{ij}=1$ if and only if edge $j$ is connected to vertex $i$, otherwise $\tA_{ij}=0$. 

The time allocations $t_{ikl}$ between galaxy $i$ and fiber $k$ in exposure $l$ from \autoref{equ:decompose} are represented by vectors $\vec{E}_l\in\{0,1\}^{|E|}$ $(l=1,\dots,T)$. The $j$th element of $\vec{E}_l$ equals $t_{ikl}$ if the $j$th edge in $E$ connects item node $i$ and constraint node $k$. Similarly, $t_{ik}$ can be represented by a vector $\vec{E}_{tot}\in\{0,1,2,...,T\}^{|E|}$, and the $j$th element of $\vec{E}_{tot}$ equals $t_{ik}$.
The target selection problem \autoref{equ:opt} is then written as
\eqsplit{
\label{equ:total}
&\arg\max_{\vec{E}_{tot}}f(\vec{E}_{tot})\\
&\tA\cdot \vec{E}_{tot}\leq T\vec{1}_{|V|}\\
&\vec{E}_{tot}\in\{0,1,2,...,T\}^{|E|}
}
We want to decompose $\vec{E}_{tot}$ into a set of $\vec{E}_l$ that satisfy
\eqsplit{
\label{equ:per-step}
&\vec{E}_{tot}=\sum_{l=1}^T\vec{E}_l\\
&\tA\cdot \vec{E}_l\leq \vec{1}_{|V|}\\
&\vec{E}_l\in\{0,1\}^{|E|}
}

\begin{theorem}
Given a solution $\vec{E}_{tot}$ to the problem \autoref{equ:total}, there exists at least one set $\{\vec{E}_1,\dots,\vec{E}_T\}$ satisfying \autoref{equ:per-step}.
\end{theorem}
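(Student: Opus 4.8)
The plan is to recognise \autoref{equ:per-step} as nothing more than a bipartite edge‑colouring statement and to close it with König's line‑colouring theorem. First I would stress that the \emph{optimality} of $\vec{E}_{tot}$ plays no role: the only hypothesis actually used is feasibility, i.e.\ that $\vec{E}_{tot}$ has non‑negative integer entries and satisfies $\tA\cdot\vec{E}_{tot}\le T\vec{1}_{|V|}$. In our construction every edge of $G$ joins exactly one item node to exactly one constraint node, so $G$ is a bipartite (multi)graph. I then build the bipartite multigraph $H$ by replacing the $j$‑th edge of $G$, which joins galaxy $i$ and fiber $k$, by $t_{ik}$ parallel copies. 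The degree of a vertex $v$ in $H$ equals the $v$‑th component of $\tA\cdot\vec{E}_{tot}$, hence $\Delta(H)\le T$.

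Next I would invoke König's theorem: a bipartite multigraph has chromatic index equal to its maximum degree, so $H$ admits a proper edge colouring $c:E(H)\to\{1,\dots,\Delta(H)\}\subseteq\{1,\dots,T\}$. For each colour $l$ I define $\vec{E}_l\in\{0,1\}^{|E|}$ by letting its $j$‑th entry count the copies of edge $j$ that receive colour $l$. That count is $0$ or $1$, because the parallel copies of a single $G$‑edge share both endpoints and a proper colouring must give them pairwise distinct colours; for the same reason the support of $\vec{E}_l$ is a matching of $G$, which is exactly the assertion $\tA\cdot\vec{E}_l\le\vec{1}_{|V|}$. Since each of the $t_{ik}$ copies of edge $j$ carries exactly one colour, summing over $l$ gives $\sum_{l=1}^{\Delta(H)}\vec{E}_l=\vec{E}_{tot}$; appending $T-\Delta(H)$ zero vectors if $\Delta(H)<T$ yields the required set $\{\vec{E}_1,\dots,\vec{E}_T\}$.

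For the polynomial‑time remark I would note that $H$ has $\sum_{i,k}t_{ik}\le T\,|V|$ edges, polynomial in the input size, and that a proper $\Delta$‑edge‑colouring of a bipartite multigraph is computable in polynomial time — e.g.\ by iteratively extracting matchings that cover all currently maximum‑degree vertices via augmenting paths, or by Euler‑partition (``colour splitting'') — after which translating the colouring back into the $\vec{E}_l$ costs $O(|E|\,T)$.

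I expect the only genuine subtlety to be the reduction itself rather than any calculation. One must check that $G$ is truly bipartite so that the \emph{$\Delta$}‑colour bound of König is available and not merely the $\Delta+1$ of Vizing's theorem, which could overshoot $T$; and one must confirm that the time‑dependent‑fiber variant of the footnote is still covered — but there each varying constraint is split into several time‑invariant ones, so every edge remains incident to exactly one constraint node per exposure and bipartiteness is preserved, making the argument go through verbatim. The remaining pieces (distinctness of colours on parallel edges, the equivalence ``colour class $\Leftrightarrow$ matching $\Leftrightarrow$ $\tA\vec{E}_l\le\vec{1}_{|V|}$'', and the zero‑padding) are routine bookkeeping.
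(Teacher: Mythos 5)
Your proposal is correct, but it follows a genuinely different route from the paper. You reduce \autoref{equ:per-step} to edge colouring: blow each edge of the bipartite graph up into $t_{ik}$ parallel copies, observe that the hypothesis $\tA\cdot\vec{E}_{tot}\le T\vec{1}_{|V|}$ is exactly the statement $\Delta(H)\le T$, and invoke K\"onig's line-colouring theorem ($\chi'=\Delta$ for bipartite multigraphs) so that each colour class is a matching, i.e.\ one exposure; your bookkeeping (parallel copies get distinct colours, colour class $\Rightarrow\tA\vec{E}_l\le\vec{1}_{|V|}$, zero-padding up to $T$ vectors) is sound, as is the observation that only feasibility, not optimality, of $\vec{E}_{tot}$ is used. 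The paper instead argues by induction on $T$: it peels off a single exposure $\vec{E}$ that must cover every saturated vertex, relaxes the resulting $0$--$1$ system to a linear program, and uses total unimodularity of the bipartite incidence matrix (via Cramer's rule) to conclude that the non-empty polytope has integer vertices, one of which is the desired exposure. The two arguments rest on the same structural fact about bipartite graphs, but yours imports a classical combinatorial theorem wholesale, making the proof shorter and giving an immediately constructive polynomial-time algorithm via standard bipartite edge-colouring routines (Euler splitting or matching extraction), whereas the paper's LP/polytope argument is self-contained, meshes with the LP-based complexity remark it makes afterwards, and extends more readily to variants where additional linear side constraints with unimodular structure are imposed on each exposure.
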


Proof by induction. When $T=1$, $\vec{E}_{tot}=\vec{E}_1$ and the theorem holds trivially.
Assume that the statement is true for $T=S$. For $T=S+1$, we have $\vec{E}_{tot,S+1}$ which satisfies
\eqsplit{
&\tA\cdot \vec{E}_{tot,S+1}\leq (S+1)\vec{1}_{|V|}\\
&\vec{E}_{tot,S+1}\in\{0,1,2,...,S+1\}^{|E|}
}
If we can find $\vec{E}\in\{0,1\}^{|E|}$ such that $\tA\cdot (\vec{E}_{tot,S+1}-\vec{E})\leq S\vec{1}_{|V|}$ and $\vec{E}_{tot,S+1}-\vec{E}\geq\vec{0}$, the problem is converted to a $T=S$ problem and we can thus find a subset \{$E_1$,...,$E_S$\}$\subset \{0,1\}^{|E|}$ such that $\vec{E}_{tot}-\vec{E}=\sum_l\vec{E}_l$. Combining all $\vec{E}_l$ and $\vec{E}$ gives a decomposition of $\vec{E}_{tot}$. Thus the theorem is equivalent to the existence of such $\vec{E}$.

$\vec{E}$ is given by the following problem:
\eqsplit{
&\tA\cdot (\vec{E}_{tot,S+1}-\vec{E})\leq S\vec{1}_{|V|}\\
&\vec{E}_{tot,S+1}-\vec{E}\geq0\\
&\vec{E}\in\{0,1\}^{|E|}
}
where $\vec{E}_{tot,S+1}$ satisfies $\tA\cdot \vec{E}_{tot,S+1}\leq (S+1)\vec{1}_{|V|}$ and $\vec{E}_{tot,S+1}\in\{0,1,2,...,S+1\}^{|E|}$.

Let $\vec{A}_v$ be the $v$th row of $A$ and $U=\{v|\vec{A}_v\cdot\vec{E}_{tot,S+1}=S+1\}$. For any $v\in U$, we must have $\vec{A}_v\cdot\vec{E}=1$. The problem becomes
\eqsplit{
\label{equ:induction}
&\tA_U\vec{E}=\vec{1}_{|U|}\\
&\tA_{V/U}\vec{E}\leq\vec{1}_{|V/U|}\\
&\vec{E}_{tot,S+1}-\vec{E}\geq0\\
&\vec{E}\in\{0,1\}^{|E|}
}
We generalize the problem into a linear system so that $\vec{E}$ can take any number between 0 and 1:  
\eqsplit{
\label{equ:linear}
&\tA_U\vec{E}=\vec{1}_{|U|}\\
&\tA_{V/U}\vec{E}\leq\vec{1}_{|V/U|}\\
&\vec{E}\leq \vec{E}_{tot,S+1}\\
&\vec{0}_{|E|}\leq\vec{E}\leq\vec{1}_{|E|}
}
Therefore, the theorem is equivalent to the existence of integer solutions of \autoref{equ:linear}. The existence of such integer solutions is guranteed by the following lemma:
\begin{lemma}
The solution set of problem \autoref{equ:linear}, a convex polytope, contains at least one integer point. 
\end{lemma}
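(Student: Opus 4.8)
The plan is to recognize the constraint matrix of \autoref{equ:linear} as a \emph{network matrix} (or, more precisely, an interval/totally unimodular matrix associated with a bipartite incidence structure), and then invoke the classical theorem that a polyhedron $\{\vec{E} : \mathbf{M}\vec{E} \leq \vec{b},\ \mathbf{M}'\vec{E} = \vec{b}'\}$ has all-integer vertices whenever the combined constraint matrix is totally unimodular (TU) and the right-hand sides $\vec{b}, \vec{b}'$ are integral. Since the feasible set of \autoref{equ:linear} is a nonempty (it contains, e.g., $\vec{E}=\vec{0}$ relaxed appropriately — actually one must check nonemptiness separately, see below) bounded polyhedron, it is the convex hull of its vertices, so at least one of these integral vertices is a genuine integer point of the polytope, proving the lemma.

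First I would assemble the full constraint matrix of \autoref{equ:linear}: the rows coming from $\tA_U$, from $\tA_{V/U}$, from the box constraints $\vec{E} \leq \vec{E}_{tot,S+1}$, and from $\vec{0} \leq \vec{E} \leq \vec{1}$. The last three families are just $\pm$ identity rows, which never destroy total unimodularity when appended to a TU matrix, so the whole question reduces to showing that $\tA$ itself — the incidence matrix of the bipartite (galaxy, fiber) graph — is totally unimodular. Here I would use the standard sufficient condition: the incidence matrix of a bipartite graph is TU, because one can two-colour the rows (galaxy-rows vs.\ fiber-rows) so that every column, having exactly one $1$ in a galaxy-row and one $1$ in a fiber-row, has its nonzeros split evenly between the two colour classes; Ghouila-Houri's criterion then gives TU directly. (One subtlety: as the footnote in the main text notes, a single galaxy can be reached by up to two fibers, and a constraint may be split into several time-invariant pieces — but each edge $t_{ik}$ still touches exactly one galaxy-node and one fiber-node, so the bipartite structure and hence TU is preserved.)

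Second, I would handle nonemptiness of the polytope, which is logically needed for "contains at least one integer point" to be non-vacuous. This is exactly where the hypothesis that $\vec{E}_{tot,S+1}$ solves \autoref{equ:total} (so $\tA\cdot\vec{E}_{tot,S+1} \leq (S+1)\vec{1}$ and entries in $\{0,\dots,S+1\}$) must be used, together with the definition $U = \{v : \vec{A}_v\cdot\vec{E}_{tot,S+1} = S+1\}$. I would argue nonemptiness via a Hall-type / flow argument: the system $\tA_U\vec{E} = \vec{1}$, $\tA_{V/U}\vec{E}\leq\vec{1}$, $\vec{0}\leq\vec{E}\leq\min(\vec{1},\vec{E}_{tot,S+1})$ is a transportation-style feasibility problem, and the fractional point $\vec{E} = \tfrac{1}{S+1}\vec{E}_{tot,S+1}$ satisfies $\tA_U\vec{E} = \vec{1}$, $\tA_{V/U}\vec{E}\leq\vec{1}$, and $\vec{0}\leq\vec{E}\leq\vec{1}$ (since entries of $\vec{E}_{tot,S+1}$ are at most $S+1$), and also $\vec{E}\leq\vec{E}_{tot,S+1}$ because $\tfrac{1}{S+1}e \leq e$ for $e\geq 0$. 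Hence the polytope is nonempty; being also bounded (the box $\vec{0}\leq\vec{E}\leq\vec{1}$), it has a vertex, and by TU plus integral right-hand sides that vertex is integral.

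The main obstacle I anticipate is not the TU argument, which is textbook, but making the reduction airtight — specifically, confirming that appending the "upper bound from $\vec{E}_{tot,S+1}$" rows genuinely keeps the matrix TU (it does, since they are $+$identity rows) and, more importantly, that the integral vertex $\vec{E}^\star$ one extracts actually lies in $\{0,1\}^{|E|}$ rather than merely being integral: this follows because the box constraints $\vec{0}\leq\vec{E}\leq\vec{1}$ force every integral coordinate into $\{0,1\}$. A secondary subtlety worth a sentence is that the polytope could in principle be lower-dimensional, but TU guarantees integrality of \emph{all} vertices of \emph{any} face described by integral data, so lower dimension causes no trouble. With these pieces in place, the lemma — and hence the theorem, by the induction in the main text — follows.
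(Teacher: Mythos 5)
Your proposal is correct and takes essentially the same route as the paper: the same fractional witness $\vec{E}=\frac{1}{S+1}\vec{E}_{tot,S+1}$ for nonemptiness, and integrality of a vertex from total unimodularity of the bipartite incidence matrix together with the integral right-hand sides. The only difference is presentational: you invoke the Hoffman--Kruskal/TU integrality theorem (and supply the Ghouila-Houri justification for TU, which the paper merely asserts), whereas the paper inlines the same argument directly via Cramer's rule on an invertible submatrix at a corner of the polytope.
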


\begin{proof}
First, we can show that this solution set is not empty because $\vec{E}'=\frac{1}{S+1}\vec{E}_{tot,S+1}$ is obviously a solution.
Now, consider an arbitrary corner of this polytope, $\vec{E}^\star$. The corner is determined by $|E|$ linearly independent equations. Equations come from the bottom two conditions will directly give the value of the corresponding element in $\vec{E}^\star$. The remaining undetermined elements of $\vec{E}^\star$ is then determined by the first two conditions, i.e., by the linear equations defined by $\tA'$, an invertible square submatrix of $\tA$. 
Since the graph is bipartite, $A$ is totally unimodular. This means that any square submatrix has determinant 1, 0 or -1. Because the submatrix $\tA'$ is invertible, its determinant can only be $\pm1$. Then by Cramer's rule, the inverse matrix is also an integral matrix. Thus the solution to the linear equations, the undetermined elements of $\vec{E}^\star$ are integers. Therefore, any corner of the solution set is an integer point. And because the set is non-empty, there must be at least one corner $\vec{E}^\star$ which is the solution to problem \autoref{equ:induction}. 
\end{proof}

\subsection*{Time Complexity}
To find such a decomposition, we can find a sequence of $\vec{E}_l$ by recursively finding $\vec{E}^\star$. Finding $\vec{E}^\star$ is no slower than polynomial time, because we can randomly choose a vector $\vec{c}$ and maximize $\vec{c}\cdot\vec{E}$ within the polytope. Since the linear programming problems can be solved in polynomial time, finding $\vec{E}^\star$ and the sequence $\{\vec{E}_l\}$ can also be done in polynomial time.

\end{document}